\let\doendproof\endproof
\renewcommand\endproof{~\hfill$\qed$\doendproof}
\begin{document}

\title{The QuaSEFE Problem\thanks{Work started at  Dagstuhl Seminar 19092, ``Beyond-Planar Graphs: Combinatorics, Models and Algorithms.''
Research supported by MIUR Project ``MODE'' under PRIN 20157EFM5C, by MIUR Project ``AHeAD'' under PRIN 20174LF3T8, by Roma Tre University Azione 4 Project ``GeoView'', by DFG grant Ka812/17-1, by NSF under grants CCF-1740858 and
CCF-1712119, and by SNSF Project
  200021E-171681.}}

\author{Patrizio Angelini\inst{1}\orcidID{0000-0002-7602-1524} \and Henry F\"orster\inst{1}\orcidID{0000-0002-1441-4189} \and Michael Hoffmann\inst{2}\orcidID{0000-0001-5307-7106} \and Michael Kaufmann\inst{1}\orcidID{0000-0001-9186-3538} \and Stephen Kobourov\inst{3}\orcidID{0000-0002-0477-2724} \and Giuseppe Liotta\inst{4} \and Maurizio Patrignani\inst{5}}

\institute{University of T\"ubingen, T\"ubingen \and 
ETH Z\"urich, Z\"urich \and
University of Arizona, Tucson \and
University of Perugia, Perugia \and
University Roma Tre, Rome}

\authorrunning{P. Angelini et al.}

\titlerunning{The QuaSEFE Problem}

\newcommand{\MK}{\texttt{QuaSEFE}\xspace}
\newcommand{\QSEFE}{\texttt{QuaSEFE}\xspace}
\newcommand{\BL}{\texttt{Beyond-Union}\xspace}
\newcommand{\GSE}{\texttt{GSE}\xspace}
\newcommand{\SEFE}{\texttt{SEFE}\xspace}
\newcommand{\PEP}{\texttt{PEP}\xspace}
\newcommand{\sunflowerSetting}{sunflower setting\xspace}

\maketitle
\begin{abstract}
We initiate the study of Simultaneous Graph Embedding with Fixed Edges in the beyond planarity framework.  
In the \QSEFE problem, we allow edge crossings, as long as each graph individually is drawn quasiplanar, that is, no three edges pairwise cross. 
We show that a triple consisting of two planar graphs and a tree admit a \QSEFE. This result also implies that a  pair consisting of a 1-planar graph and a planar graph admits a \QSEFE. We show several other positive results for triples of planar graphs, in which certain structural properties for their common subgraphs are fulfilled. For the case in which simplicity is also required, we give a triple consisting of two quasiplanar graphs and a star that does not admit a \QSEFE. Moreover, in contrast to the planar \SEFE problem, we show that it is not always possible to obtain a \QSEFE for two matchings if the quasiplanar drawing of one matching is fixed.

\keywords{Quasiplanar \and SEFE \and Simultaneous graph drawing}
\end{abstract}

\section{Introduction}

Simultaneous Graph Embedding is a family of problems where one is given a set of graphs $G_1, \dots, G_k$ with shared vertex set $V$ and is required to produce drawings $\Gamma_1, \dots, \Gamma_k$ of them, each satisfying certain readability properties, so that each vertex has the same position in every $\Gamma_i$. The readability property that is usually pursued is the planarity of the drawing, and a large body of research has been devoted to establish the complexity of the corresponding decision problem, or to determine whether such embeddings always exist, given the number and the types of the graphs; for a survey refer to~\cite{BKR13}.

These problems have been studied both from a geometric (\emph{Geometric Simultaneous Embedding} - \GSE)~\cite{AGKN12,EGJPSS07} and from a topological point of view (\emph{Simultaneous Embedding with Fixed Edges} - \SEFE)~\cite{BR16,BCDEEIKLM07,DBLP:conf/gd/Frati06}. In particular, in \GSE the edges are straight-line segments, while in \SEFE they are topological curves, but the edges shared between two graphs $G_i$ and $G_j$ have to be drawn in the same way in $\Gamma_i$ and $\Gamma_j$. Unless otherwise specified, we focus on the topological setting.

We study a relaxation of the \SEFE problem, where the 
graphs can be drawn 
with edge crossings. However, we prohibit certain crossing configurations in the drawings $\Gamma_1, \dots, \Gamma_k$, to guarantee their readability, i.e., we require that they satisfy the conditions of a graph class in the area of \emph{beyond-planarity}; see~\cite{DBLP:journals/csur/DidimoLM19} for a survey on this topic. 
We initiate this study with the class of \emph{quasiplanar} graphs~\cite{AT07,AgarwalAPPS97,FoxPS13}, by requiring that no $\Gamma_i$ contains three mutually crossing edges.

\begin{definition}[\QSEFE]
Given a set of graphs $G_1=(V,E_1), \ldots, G_k=(V,E_k)$ with shared vertex set $V$, we say that $\langle G_1, \ldots, G_k\rangle$ admits a \QSEFE if there exist quasiplanar drawings $\Gamma_1, \dots, \Gamma_k$ of $G_1, \ldots, G_k$, respectively, so that each vertex of $V$ has the same position in every $\Gamma_i$ and each edge shared between two graphs $G_i$ and $G_j$ is drawn in the same way in $\Gamma_i$ and $\Gamma_j$. 
Further, the \QSEFE problem asks whether an instance $\langle G_1, \ldots, G_k \rangle$ admits a \QSEFE.
\end{definition}



It may be worth mentioning that the problem of computing quasiplanar simultaneous embeddings of graph pairs has been studied in the geometric setting~\cite{DBLP:journals/cj/GiacomoDLMW15,DBLP:journals/ipl/DidimoKLOS12}. Also, simultaneous embeddings have been considered in relation to another beyond-planarity geometric graph class, namely \emph{RAC graphs}~\cite{ArgyriouBKS13,DBLP:journals/jgaa/BekosDKW16,DBLP:journals/tcs/EvansLM16,DBLP:journals/jgaa/Grilli18}.

We prove in Section~\ref{sec:algorithms} that any triple of two planar graphs and a tree admits a \QSEFE, which also implies that any pair consisting of a $1$-planar graph\footnote[1]{A graph is $k$-planar if it admits a drawing where each edge has at most $k$ crossings.} and a planar graph admits a \QSEFE. Recall that, for the original \SEFE problem, there exist even negative instances composed of two outerplanar graphs~\cite{DBLP:conf/gd/Frati06}. Further, we investigate triples of planar graphs in which the common subgraphs have specific structural properties. Finally, we show negative results in more specialized settings in Section~\ref{sec:negative} and conclude
 with open problems in Section~\ref{sec:openProblems}.






\section{Sufficient Conditions for \QSEFE{s}}\label{sec:algorithms}

In this section, we provide several sufficient conditions for the existence of a \QSEFE, mainly focusing on instances composed of three planar graphs $G_1$, $G_2$, and $G_3$. 
We start with a theorem relating the existence of a \SEFE of two of the input graphs to the existence of a \QSEFE of the three input graphs.

\begin{theorem}
Let $G_1=(V,E_1)$, $G_2=(V,E_2)$, and $G_3=(V,E_3)$ be planar graphs with shared vertex set $V$. If $\langle G_1\setminus G_3,G_2\setminus G_3 \rangle$ admits a \SEFE, then $\langle G_1, G_2, G_3 \rangle$ admits a \QSEFE, in which the drawing of $G_3$ is planar.
\label{thm:quaSEFEFromSEFE}
\end{theorem}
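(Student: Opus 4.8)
The plan is to build the three output drawings by overlaying a planar drawing of $G_3$ on top of the given \SEFE, exploiting the fact that the edge set of each $G_i$ then splits into two internally planar groups, which forces quasiplanarity for free.

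First I would fix the \SEFE $\Gamma$ of $G_1\setminus G_3$ and $G_2\setminus G_3$ guaranteed by the hypothesis, and let $q_v$ denote the position that $\Gamma$ assigns to each vertex $v\in V$. In $\Gamma$ the edges of $G_1\setminus G_3$ are pairwise non-crossing and so are the edges of $G_2\setminus G_3$, while the shared edges in $(E_1\cap E_2)\setminus E_3$ are drawn identically in the two restrictions. Next I would produce a \emph{planar} drawing $\Delta$ of $G_3$ that uses exactly the positions $\{q_v\}$ for its vertices. Since $G_3$ is planar it has some planar drawing $\Delta_0$; I would then invoke the fact that the homeomorphism group of the plane acts transitively on sets of $n$ distinct labeled points (equivalently, that a planar graph can be drawn planar with curves on any prescribed vertex positions, cf.\ Pach--Wenger) to obtain a homeomorphism carrying the vertices of $\Delta_0$ to $\{q_v\}$; its image $\Delta$ is a planar drawing of $G_3$ with $v$ placed at $q_v$. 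After a slight perturbation I may assume that $\Delta$ meets $\Gamma$ only in finitely many transversal crossings. This step---realizing $G_3$ planar on the already-fixed vertex positions---is the one nontrivial ingredient and the main obstacle; everything else is bookkeeping.

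Finally I would assemble the output. Set $\Gamma_3:=\Delta$, which is planar as required. Let $\Gamma_1$ consist of the edges of $G_1\setminus G_3$ drawn as in $\Gamma$ together with the edges of $E_1\cap E_3$ drawn as in $\Delta$, and define $\Gamma_2$ symmetrically. The vertex positions coincide ($q_v$ everywhere), and the shared-edge condition holds because every $G_3$-edge is taken from the single drawing $\Delta$, while every remaining shared edge (those in $(E_1\cap E_2)\setminus E_3$) is taken from the single drawing $\Gamma$. It then remains only to check quasiplanarity of each $\Gamma_i$. In $\Gamma_1$ the edge set partitions into $E_1\setminus E_3$ (pairwise non-crossing, being a planar drawing inside $\Gamma$) and $E_1\cap E_3$ (pairwise non-crossing, being part of the planar $\Delta$); hence any three edges of $\Gamma_1$ contain two from the same group, which do not cross, so no three edges are pairwise crossing. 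Equivalently, the crossing graph of $\Gamma_1$ is bipartite and thus triangle-free. The identical argument handles $\Gamma_2$, and $\Gamma_3$ is planar, so $\langle G_1,G_2,G_3\rangle$ admits a \QSEFE with $G_3$ drawn planar. I would also remark that the planarity of the full graphs $G_1,G_2$ is never used beyond what the \SEFE hypothesis on $G_1\setminus G_3,\,G_2\setminus G_3$ and the planarity of $G_3$ already supply.
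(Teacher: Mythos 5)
Your proposal is correct and takes essentially the same route as the paper's proof: fix the \SEFE of $\langle G_1\setminus G_3, G_2\setminus G_3\rangle$, then draw $G_3$ planar on the already-placed vertices via Pach--Wenger, and deduce quasiplanarity of $\Gamma_1$ and $\Gamma_2$ from the partition of each edge set into two planar groups. The only difference is that you spell out the homeomorphism justification and the shared-edge bookkeeping, which the paper leaves implicit by citing Pach--Wenger directly.
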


\begin{proof}
First construct a \SEFE of $\langle G_1\setminus G_3,G_2\setminus G_3 \rangle$, and then construct a planar drawing of $G_3$, whose vertices have already been placed, but whose edges have not been drawn yet, using the algorithm by Pach and Wenger~\cite{PW01}. 

The drawing of $G_3$ is planar, by construction. The drawing of $G_1$ is quasiplanar, as it is partitioned into two subgraphs, $G_1\setminus G_3$ and $G_1\cap G_3$, each of which is drawn planar. Analogously, the drawing of $G_2$ is quasiplanar.
\end{proof}

Since every pair composed of a planar graph and a tree admits a \SEFE~\cite{DBLP:conf/gd/Frati06}, we derive from Theorem~\ref{thm:quaSEFEFromSEFE} the following positive result for the \QSEFE problem.

\begin{corollary}
Let $G_1=(V,E_1)$ and $G_3=(V,E_3)$ be planar graphs and $T_2=(V,E_2)$ be a tree with shared vertex set $V$. Then $\langle G_1,T_2,G_3 \rangle$ admits a \QSEFE, in which the drawing of $G_3$ is planar.
\label{cor:quaSEFEPPT}
\end{corollary}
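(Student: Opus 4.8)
The plan is to obtain the corollary as an immediate application of Theorem~\ref{thm:quaSEFEFromSEFE}, instantiated with $G_2 := T_2$. The theorem reduces the existence of a \QSEFE of the triple (with $G_3$ drawn planar) to the existence of a \SEFE of the pair $\langle G_1 \setminus G_3,\, T_2 \setminus G_3\rangle$, so it suffices to exhibit such a \SEFE.

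First I would observe that both graphs in this pair are well structured: $G_1 \setminus G_3$ is a subgraph of the planar graph $G_1$ and is therefore planar, while $T_2 \setminus G_3$ is a subgraph of the tree $T_2$ and is therefore a forest. The paper already records the key ingredient, namely that every pair consisting of a planar graph and a tree admits a \SEFE~\cite{DBLP:conf/gd/Frati06}.

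The one point that needs care---and the only real obstacle---is that $T_2 \setminus G_3$ is a \emph{forest} rather than a tree, so the cited result does not apply verbatim. I would bridge this gap by augmenting the forest $T_2 \setminus G_3$ with additional edges on $V$ until it becomes a spanning tree $T$ on $V$, which is always possible since the three graphs share the same vertex set. Applying the cited result to the planar graph $G_1 \setminus G_3$ and the tree $T$ yields a \SEFE of $\langle G_1 \setminus G_3,\, T\rangle$. Deleting from this \SEFE the edges of $T$ that do not belong to $T_2 \setminus G_3$ leaves a valid \SEFE of $\langle G_1 \setminus G_3,\, T_2 \setminus G_3\rangle$, since removing edges preserves both the planarity of each individual drawing and the consistency of the shared edges.

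Finally, feeding this \SEFE into Theorem~\ref{thm:quaSEFEFromSEFE} produces a \QSEFE of $\langle G_1, T_2, G_3\rangle$ in which $G_3$ is drawn planar, as required. I expect the entire argument to be short, with essentially all the work delegated to the previous theorem and to~\cite{DBLP:conf/gd/Frati06}; the forest-versus-tree augmentation is the only step that merits an explicit remark.
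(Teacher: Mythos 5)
Your proposal is correct and follows essentially the same route as the paper: instantiate Theorem~\ref{thm:quaSEFEFromSEFE} with $G_2 := T_2$ and invoke the planar-graph-plus-tree \SEFE result of~\cite{DBLP:conf/gd/Frati06}. The only difference is that you explicitly patch the forest-versus-tree mismatch by augmenting $T_2 \setminus G_3$ to a spanning tree and deleting the extra edges afterwards, a detail the paper leaves implicit; your handling of it is valid.
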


Corollary~\ref{cor:quaSEFEPPT} already shows that allowing quasiplanarity significantly enlarges the set of positive instances. 
We further strengthen this result, by additionally guaranteeing that even the tree is drawn planar. For this, we use a result on the \emph{partially embedded planarity}~\cite{ABFJKPR15} problem (\PEP): 
Given a planar graph $G$, a subgraph $H$ of $G$, and a planar embedding $\mathcal{H}$ of $H$, is there a planar embedding of $G$ whose restriction to $H$ coincides with $\mathcal{H}$?
In particular, we will exploit the following characterization, which is the core of a linear-time algorithm for \PEP.

\begin{lemma}[\cite{ABFJKPR15}] Let $(G,H,\mathcal{H})$ be an instance of \PEP. A planar embedding $\mathcal{G}$ of $G$ is a solution for $(G,H,\mathcal{H})$ if and only if the following conditions hold:
\begin{inparaenum}[\bf ({C.}1)]
    \item \label{condition:1PEP} for every vertex $v \in V$, the edges incident to $v$ in $H$ appear in the same cyclic order in the rotation schemes of $v$ in $\mathcal{H}$ and in $\mathcal{G}$; and
    \item \label{condition:2PEP} for every cycle $C$ of $H$, and for every vertex $v$ of $H\setminus C$, we have that $v$ lies in the interior of $C$ in $\mathcal{G}$ if and only if it lies in the interior of $C$ in $\mathcal{H}$.
\end{inparaenum}
\label{lem:partialEmbedding}
\end{lemma}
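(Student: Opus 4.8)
The plan is to prove both directions of the equivalence, with almost all of the work concentrated in sufficiency. Throughout I would regard a planar embedding as the combinatorial object given by a rotation system together with the relative positions (nesting) of its connected components, and I would work with the \emph{restriction} $\mathcal{G}|_H$ of $\mathcal{G}$ to $H$, obtained by deleting from the drawing the edges and isolated vertices of $G$ that do not belong to $H$; by definition, $\mathcal{G}$ is a solution exactly when $\mathcal{G}|_H=\mathcal{H}$.

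For \emph{necessity}, I would assume $\mathcal{G}|_H=\mathcal{H}$. Condition (C.1) is immediate: the rotation of a vertex $v$ in $\mathcal{G}|_H$ is the cyclic suborder of its rotation in $\mathcal{G}$ that retains only the edges of $H$, so the $H$-edges around $v$ keep the same cyclic order in $\mathcal{G}$ and in $\mathcal{H}$. For (C.2), a cycle $C$ of $H$ is drawn by $\mathcal{G}$ as a Jordan curve (whose interior is the bounded region), and passing to $\mathcal{G}|_H$ neither moves $C$ nor moves any vertex $v$ of $H$; hence $v$ lies on the same side of $C$ in $\mathcal{G}$ and in $\mathcal{H}$. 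So restriction preserves both features.

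For \emph{sufficiency}, I would set $\mathcal{H}'=\mathcal{G}|_H$ and apply the already established necessity direction to the trivial solution $\mathcal{G}$ of the instance $(G,H,\mathcal{H}')$: this shows that $\mathcal{H}'$ satisfies (C.1) and (C.2) with respect to $\mathcal{G}$. Since by hypothesis $\mathcal{H}$ does too, $\mathcal{H}$ and $\mathcal{H}'$ induce the very same rotation at every vertex and the very same interior/exterior relation for every cycle--vertex pair. The task thus reduces to the structural statement that \emph{two planar embeddings of $H$ sharing their rotation system and their cycle--vertex interior relation are identical}. I would prove this in two steps. First, within each connected component the rotation system alone determines the combinatorial embedding, since the faces are reconstructed by the standard face-tracing procedure; thus $\mathcal{H}$ and $\mathcal{H}'$ have identical faces on every component. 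Second, the relative placement of distinct components---encoded by which face of a component $K$ contains a given foreign vertex $u$---must be shown to follow from the cycle data: when $K$ is $2$-connected, each face is bounded by a simple cycle and is cut out by a definite pattern of inside/outside choices over the cycles of $K$, so the interior relations of (C.2) identify the face containing $u$; the general connected case follows by treating the blocks of $K$ separately, while acyclic parts neither separate the plane nor constrain anything. As an embedding of $H$ is determined by the per-component rotation systems together with these relative positions, I would conclude $\mathcal{H}'=\mathcal{H}$.

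The hard part will be this second step of the structural statement: showing that the purely cycle-based information of (C.2) is as strong as full face-containment information about relative positions. The delicate cases are components that are connected but not $2$-connected and, more subtly, acyclic components, for which (C.2) imposes no constraint at all; here one must check that the absent cycle constraints correspond exactly to genuine homeomorphism freedom of the embedding, so that nothing is left under-determined and the reconstruction of $\mathcal{H}$ from $\mathcal{H}'$ is indeed forced.
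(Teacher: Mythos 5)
First, a framing remark: the paper does not prove this lemma at all --- it is imported verbatim, with citation, from the partially embedded planarity paper~\cite{ABFJKPR15}, and is then only \emph{used} (in Theorem~\ref{thm:mkTwoPlanarOneTree} and the corollaries). So there is no in-paper proof to compare against, and your attempt must be judged as a reconstruction of the cited result. Within that framing, your necessity direction is correct, and your reduction of sufficiency is a genuinely nice move: applying the already-proved necessity direction to the trivial solution $\mathcal{G}$ of the instance $(G,H,\mathcal{G}|_H)$ shows that $\mathcal{H}$ and $\mathcal{H}':=\mathcal{G}|_H$ share their rotation systems and their cycle--vertex interior relations, so the lemma reduces to the uniqueness statement you isolate: two planar embeddings of $H$ with the same rotations and the same cycle--vertex relations coincide.

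The problem is that this uniqueness statement \emph{is} the entire mathematical content of the lemma, and your treatment of it is a sketch with a real hole --- one you partly acknowledge yourself. Two concrete issues. First, your working definition of an embedding (rotation system plus nesting of connected components) omits the choice of outer face, yet Condition~C.\ref{condition:2PEP} speaks of the \emph{interior} of a cycle, which is only meaningful for plane embeddings with a distinguished unbounded face. For a connected graph your data degenerates to the rotation system alone, and that does not determine the plane embedding: the plane embeddings of $K_4$ with outer face $abc$ versus $abd$ have identical rotation systems but different vertex-in-cycle relations ($d$ lies inside the triangle $abc$ in the first and outside it in the second). So under your own definitions, ``sharing the cycle--vertex relations'' is not even a well-defined property of your embedding objects, and proving ``same rotation $+$ same nesting'' would in any case be weaker than the required conclusion $\mathcal{G}|_H=\mathcal{H}$ as plane embeddings. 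Second, your face-signature argument needs distinct faces to be distinguished by the \emph{vertex}-based data of C.\ref{condition:2PEP}, not by face-based inside/outside data. These two notions differ exactly when the only cycles separating two candidate faces pass through \emph{all} vertices of $H$ (for instance when $H$ is a single cycle, or when a face is bounded by a spanning cycle); there one must prove that the residual freedom is a homeomorphism rather than a genuinely different embedding. Together with the block decomposition and the placement of acyclic components, which you explicitly defer, these are precisely the delicate steps --- so what you have is a correct two-line necessity proof plus a correct reduction, resting on an unproven structural claim whose proof is where all the work of~\cite{ABFJKPR15} actually lies.
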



\begin{theorem}
Let $G_1=(V,E_1)$ and $G_3=(V,E_3)$ be planar graphs and $T_2=(V,E_2)$ be a tree with shared vertex set $V$. Then $\langle G_1,T_2,G_3 \rangle$ admits a \QSEFE, in which the drawings of $G_1$ and $T_2$ are planar.
\label{thm:mkTwoPlanarOneTree}
\end{theorem}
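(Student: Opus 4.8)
The plan is to draw $G_1$ and $T_2$ planar while realizing $G_3$ as the union of two crossing-free pieces, so that $G_3$ is automatically quasiplanar. Concretely, I partition $G_3$ into the subgraphs $A := G_3 \cap G_1$ and $B := G_3 \setminus G_1$, both of which are planar as subgraphs of $G_3$. If $A$ is drawn as part of a planar drawing of $G_1$ and $B$ is drawn planar, then every crossing in $G_3$ is between an $A$-edge and a $B$-edge, so no three edges of $G_3$ pairwise cross. Thus it suffices to produce planar drawings of $G_1$, of $T_2$, and of $B$ that share the vertex positions and agree on every shared edge. The shared edges here are exactly the two forests $G_1 \cap T_2$ and $B \cap T_2$, both subgraphs of the tree $T_2$; crucially, $G_1$ and $B$ share no edge, since $B$ contains no edge of $E_1$.

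First I would fix an arbitrary planar embedding $\mathcal{G}_1$ of $G_1$ and an arbitrary planar embedding $\mathcal{B}$ of $B$. These induce, at each vertex $v$, a required cyclic order on the edges of $T_2 \cap G_1$ incident to $v$ and a required cyclic order on the edges of $T_2 \cap B$ incident to $v$. Since these two edge sets are disjoint, I can choose a rotation of all edges of $T_2$ at $v$ that restricts to both prescribed orders, for instance by placing each group as a contiguous block. Because every rotation system of a tree yields a planar embedding, this defines a planar embedding $\mathcal{T}_2$ of $T_2$ whose restriction to $T_2 \cap G_1$ agrees with $\mathcal{G}_1$ and whose restriction to $T_2 \cap B$ agrees with $\mathcal{B}$. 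I then invoke Lemma~\ref{lem:partialEmbedding}: viewing $(G_1, T_2 \cap G_1, \mathcal{T}_2)$ and $(B, T_2 \cap B, \mathcal{T}_2)$ as instances of \PEP, condition (C.1) holds by the choice of $\mathcal{T}_2$, while condition (C.2) is vacuous because the partial graphs are forests and hence contain no cycle. Consequently $\mathcal{G}_1$ solves the first instance and $\mathcal{B}$ solves the second, so both embeddings extend the embedding that $\mathcal{T}_2$ induces on the respective shared forest.

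It then remains to turn these mutually compatible embeddings into actual drawings: draw $T_2$ planar realizing $\mathcal{T}_2$ and placing the vertices; extend this to a planar drawing of $G_1$ realizing $\mathcal{G}_1$ while keeping the curves of $T_2 \cap G_1$ fixed; and, independently (since $G_1$ and $B$ share no edge and may therefore cross freely), extend it to a planar drawing of $B$ realizing $\mathcal{B}$ while keeping the curves of $T_2 \cap B$ fixed. I expect this realization step to be the main obstacle, as it upgrades the purely combinatorial embedding extension guaranteed by Lemma~\ref{lem:partialEmbedding} to a drawing extension with prescribed vertex positions and coincident shared curves. I would handle it with the partial-drawing machinery underlying \PEP~\cite{ABFJKPR15} together with a Pach--Wenger point-set realization~\cite{PW01}, in the same spirit as the proofs of Theorem~\ref{thm:quaSEFEFromSEFE} and Corollary~\ref{cor:quaSEFEPPT}. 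Once all three drawings are in place, $G_1$ and $T_2$ are planar by construction and $G_3 = A \cup B$ is quasiplanar, which yields the desired \QSEFE.
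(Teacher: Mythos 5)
Your proof is correct and takes essentially the same approach as the paper: both decompose $G_3$ into $G_3 \cap G_1$ and $G_3 \setminus G_1$ (each drawn planar so that $G_3$ is quasiplanar), exploit the fact that every rotation scheme of the tree $T_2$ is planar to merge the rotations prescribed by $\mathcal{G}_1$ and by the embedding of $G_3 \setminus G_1$ on their disjoint shared edge sets, and then apply Lemma~\ref{lem:partialEmbedding}, with Condition~(C.2) vacuous because the pre-embedded subgraphs are forests. The only difference --- you draw $T_2$ first and extend to $G_1$ and $G_3 \setminus G_1$, whereas the paper draws $G_1$ first, then $T_2$, then $G_3 \setminus G_1$ --- is immaterial.
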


\begin{proof}


Consider planar embeddings $\mathcal{G}_1$ and $\mathcal{G}_3^*$ of $G_1$ and $G_3 \setminus G_1$, respectively.
We draw $G_1$ according to $\mathcal{G}_1$. This fixes the embedding of the subgraph $T_2 \cap G_1$ of $T_2$, thus resulting in an instance of the \PEP problem. 
Since $T_2$ is acyclic, Condition~C.\ref{condition:2PEP} of Lemma~\ref{lem:partialEmbedding} is trivially fulfilled. Also, since every rotation scheme of $T_2$ is planar, we choose for the edges of $(T_2 \cap G_3) \setminus G_1$ an order compatible with $\mathcal{G}_3^*$, still satisfying Condition~C.\ref{condition:1PEP}.
Finally, we draw the remaining edges of~$G_3$ by considering the instance of~\PEP defined by its embedded subgraph $(T_2 \cap G_3) \setminus G_1$. Condition~C.\ref{condition:2PEP} is trivially satisfied, and Condition~C.\ref{condition:1PEP} is satisfied by construction, if we add the edges of~$G_3$ according to $\mathcal{G}_3^*$. 
Since crossings edges of the same graph belong to $G_3 \setminus G_1$ and $G_3 \cap G_1$, the drawing of $G_3$ is quasiplanar.
\end{proof}

The additional property guaranteed by Theorem~\ref{thm:mkTwoPlanarOneTree} is crucial to infer the first result in the simultaneous embedding setting for a class of beyond-planar graphs.

\begin{theorem}
Let $G_1=(V,E_1)$ be a $1$-planar graph and $G_2=(V,E_2)$ be a planar graph. Then $\langle G_1,G_2 \rangle$ admits a \QSEFE. 
\label{th:1planarPlanar}
\end{theorem}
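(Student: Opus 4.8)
The plan is to reduce the pair $\langle G_1,G_2\rangle$ to an instance of Theorem~\ref{thm:mkTwoPlanarOneTree} by splitting the $1$-planar graph into a planar part and a forest. First I would invoke the known decomposition result that the edges of every $1$-planar graph can be partitioned into a planar graph $P$ and a forest $F$ on the same vertex set $V$ (intuitively, fix a $1$-planar drawing of $G_1$, keep one edge of each crossing pair to form $P$, and observe that the discarded edges, one per crossing, can be chosen so as to form a forest). Since any forest on $V$ can be augmented to a spanning tree $T_2\supseteq F$ by adding edges that join distinct components, I would fix such a tree $T_2$; the augmentation is only needed so that Theorem~\ref{thm:mkTwoPlanarOneTree} applies verbatim, as its proof in fact uses of $T_2$ only acyclicity and the planarity of every rotation scheme, both of which $F$ already satisfies.

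Next I would apply Theorem~\ref{thm:mkTwoPlanarOneTree} to the triple $\langle P,T_2,G_2\rangle$ of the planar graph $P$, the tree $T_2$, and the planar graph $G_2$, obtaining a \QSEFE in which $P$ and $T_2$ are drawn planar and $G_2$ is drawn quasiplanar. From this I read off a drawing of $G_1=P\cup F$ by taking the curves of the $P$-edges together with those of the $F$-edges, the latter inherited from the planar drawing of $T_2\supseteq F$, and I keep the produced drawing of $G_2$ unchanged. Every vertex keeps its position, and every edge shared by $G_1$ and $G_2$ lies either in $P$ or in $F\subseteq T_2$, hence is drawn identically in both graphs by the \QSEFE of the triple; the auxiliary edges of $T_2\setminus F$ play no role for $G_1$ and are simply ignored.

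It remains to check that the extracted drawing of $G_1$ is quasiplanar, and this is the point where the strengthening of Theorem~\ref{thm:mkTwoPlanarOneTree} over Corollary~\ref{cor:quaSEFEPPT} is essential. Since both $P$ and $T_2$ are drawn planar, no two edges of $P$ cross and no two edges of $F\subseteq T_2$ cross. Therefore, among any three edges of $G_1=P\cup F$, at least two lie in the same part and hence do not cross, so no three edges of $G_1$ pairwise cross. Thus $G_1$ is drawn quasiplanar, $G_2$ is drawn quasiplanar, and together the two drawings constitute a \QSEFE of $\langle G_1,G_2\rangle$.

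The main content, and the only genuinely external ingredient, is the decomposition of a $1$-planar graph into a planar graph and a forest; once that is granted, everything else is bookkeeping driven by the pigeonhole argument above. The subtle point to guard against is exactly that one cannot relax $T_2$ to a merely quasiplanar drawing, as would be available from Corollary~\ref{cor:quaSEFEPPT}: if $F$ were allowed to cross itself, the union $P\cup F$ could acquire three mutually crossing edges and the reduction would break. Hence the "both drawn planar" guarantee of Theorem~\ref{thm:mkTwoPlanarOneTree} is precisely what makes the argument go through.
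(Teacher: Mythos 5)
Your proof is correct and follows essentially the same route as the paper: decompose the $1$-planar graph into a planar graph and a forest (Ackerman's result, cited as~\cite{A14} in the paper), augment the forest to a tree, apply Theorem~\ref{thm:mkTwoPlanarOneTree} to the resulting triple, and conclude quasiplanarity of $G_1$ because its edges split into two parts, each drawn planar. Your additional remarks---that the extracted drawing respects shared edges, and that the ``both drawn planar'' strengthening over Corollary~\ref{cor:quaSEFEPPT} is indispensable---are exactly the points the paper relies on (the latter is stated just before the theorem), only spelled out in more detail.
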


\begin{proof}
As $G_1$ is $1$-planar, it is the union of a planar graph $G_1'$ and a forest $F_1$~\cite{A14}. We augment $F_1$ to a tree $T_1$. 
By Theorem~\ref{thm:mkTwoPlanarOneTree}, there is a \QSEFE of $\langle G_1', T_1, G_2 \rangle$ where $G_1'$ and $T_1$ are drawn planar. Thus, $G_1$ is drawn quasiplanar.
\end{proof}

We now study properties of the subgraphs induced by the edges that belong to one, to two, or to all the input graphs. We denote by $H_i$ the subgraph induced by the edges only in $G_i$; by $H_{i,j}$ the subgraph induced by the edges only in $G_i$ and $G_j$; and by $H$ the subgraph induced by the edges in all graphs; see Fig.~\ref{fig:intersections}.


The following two corollaries of Theorem~\ref{thm:quaSEFEFromSEFE} list sufficient conditions for $G_1 \setminus G_3$ and $G_2\setminus G_3$ to have a \SEFE. In the first case, $H_{1,2}$ has a unique embedding, which fulfills the conditions of Lemma~\ref{lem:partialEmbedding} with respect to any planar 
embedding of $G_1$ and of $G_2$. In the second case, this is because $G_1\setminus G_3$ is a subgraph of $G_2 \setminus G_3$.

\begin{corollary}
Let $G_1=(V,E_1)$, $G_2=(V,E_2)$, $G_3=(V,E_3)$ be planar graphs with shared vertex set $V$. If $H_{1,2}$ is acyclic and has maximum degree $2$, then $\langle G_1, G_2, G_3 \rangle$ admits a \QSEFE.
\label{cor:G1G2Small}
\end{corollary}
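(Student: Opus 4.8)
The plan is to apply Theorem~\ref{thm:quaSEFEFromSEFE}, so the whole task reduces to showing that $\langle G_1\setminus G_3, G_2\setminus G_3\rangle$ admits a \SEFE. The key observation is that the common subgraph of these two graphs is exactly $H_{1,2}$, the subgraph of edges shared by $G_1$ and $G_2$ but not present in $G_3$. Indeed, an edge lies in both $G_1\setminus G_3$ and $G_2\setminus G_3$ precisely when it belongs to $G_1$ and $G_2$ but not to $G_3$, which is the definition of $H_{1,2}$. So the shared structure whose embedding must be made consistent across the two drawings is $H_{1,2}$, and the hypothesis tells us this graph is acyclic with maximum degree~$2$, i.e.\ a disjoint union of paths.

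The standard approach to producing a \SEFE of two planar graphs $A$ and $B$ whose common subgraph is $S$ is to find a single planar embedding $\mathcal{S}$ of $S$ that extends to a planar embedding of $A$ and, independently, to a planar embedding of $B$; one can then realize both extensions with the shared edges drawn identically. By Lemma~\ref{lem:partialEmbedding}, an embedding $\mathcal{S}$ of the common subgraph extends to a planar embedding of a supergraph exactly when the two \PEP conditions hold. First I would invoke this lemma twice, once for $A=G_1\setminus G_3$ and once for $B=G_2\setminus G_3$, with $S=H_{1,2}$ in both cases.

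The crux is that when $H_{1,2}$ is a union of paths, every embedding of a containing planar graph induces a \PEP-compatible embedding of $H_{1,2}$, so no nontrivial coordination between the two sides is needed. For Condition~C.\ref{condition:2PEP}: since $H_{1,2}$ is acyclic it contains no cycle $C$, so the condition is vacuously satisfied, exactly as in the proof of Theorem~\ref{thm:mkTwoPlanarOneTree}. For Condition~C.\ref{condition:1PEP}: since $H_{1,2}$ has maximum degree~$2$, each vertex is incident to at most two edges of $H_{1,2}$, and the cyclic order of at most two edges around a vertex is unique up to reflection and imposes no real constraint. Hence any planar embedding of $G_1\setminus G_3$ restricts to a valid embedding of $H_{1,2}$, and likewise for $G_2\setminus G_3$; one can choose the two embeddings so that they agree on $H_{1,2}$, which is what ``unique embedding'' in the text refers to.

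I expect the main obstacle to be stating precisely why a \emph{single} embedding of $H_{1,2}$ simultaneously extends to both $G_1\setminus G_3$ and $G_2\setminus G_3$, rather than merely that each side individually admits some extension. The point is that because the degree-$\le 2$, acyclic structure of $H_{1,2}$ places no binding constraint on the rotation schemes of the two larger graphs, we may fix arbitrary planar embeddings of $G_1\setminus G_3$ and $G_2\setminus G_3$, read off the induced (trivially compatible) embeddings of $H_{1,2}$, and reconcile them---the freedom at every vertex of $H_{1,2}$ and the absence of cycles guarantee the reconciliation succeeds. Once this common embedding is established, the \SEFE exists, and Theorem~\ref{thm:quaSEFEFromSEFE} delivers the \QSEFE of $\langle G_1,G_2,G_3\rangle$, completing the proof.
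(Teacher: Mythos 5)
Your proof is correct and follows essentially the same route as the paper: reduce via Theorem~\ref{thm:quaSEFEFromSEFE} to a \SEFE of $\langle G_1\setminus G_3, G_2\setminus G_3\rangle$, observe that their common subgraph is $H_{1,2}$, and use the fact that an acyclic, maximum-degree-$2$ graph has a unique embedding satisfying both conditions of Lemma~\ref{lem:partialEmbedding} with respect to any planar embedding of the two larger graphs. The paper compresses this into one sentence, while you spell out the application of the \PEP conditions (C.\ref{condition:1PEP} trivial by degree at most~$2$, C.\ref{condition:2PEP} vacuous by acyclicity), which is exactly the intended argument.
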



\begin{corollary}
Let $G_1=(V,E_1)$, $G_2=(V,E_2)$, $G_3=(V,E_3)$ be planar graphs with shared vertex set $V$. If $H_1 = \emptyset$, then $\langle G_1, G_2, G_3 \rangle$ admits a \QSEFE.
\label{cor:G1Empty}
\end{corollary}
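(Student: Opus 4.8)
The plan is to reduce to Theorem~\ref{thm:quaSEFEFromSEFE} by exhibiting a \SEFE of the pair $\langle G_1 \setminus G_3, G_2 \setminus G_3 \rangle$. As anticipated in the text preceding the statement, the hypothesis $H_1 = \emptyset$ forces the containment $G_1 \setminus G_3 \subseteq G_2 \setminus G_3$, and once we have a subgraph relationship, producing a \SEFE is trivial.

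First I would verify the containment at the level of individual edges. Consider any edge $e \in G_1 \setminus G_3$, so that $e \in E_1$ and $e \notin E_3$. Since $H_1 = \emptyset$, no edge belongs only to $G_1$; hence $e$ must also lie in $G_2$ or in $G_3$. As $e \notin E_3$, we are forced to conclude $e \in E_2$, which together with $e \notin E_3$ gives $e \in G_2 \setminus G_3$. This establishes $G_1 \setminus G_3 \subseteq G_2 \setminus G_3$.

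With the containment in hand, I would construct the \SEFE directly. Draw $G_2 \setminus G_3$ using any planar embedding, and let $G_1 \setminus G_3$ inherit the drawing of each of its edges from this embedding. Every edge of $G_1 \setminus G_3$ is also an edge of $G_2 \setminus G_3$, so all shared edges are drawn identically, and the restriction of a planar drawing to a subset of its edges remains planar. Thus both drawings are planar and agree on shared edges, which is precisely a \SEFE of $\langle G_1 \setminus G_3, G_2 \setminus G_3 \rangle$. Invoking Theorem~\ref{thm:quaSEFEFromSEFE} then immediately yields a \QSEFE of $\langle G_1, G_2, G_3 \rangle$.

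I do not expect any genuine obstacle here, as the statement is a direct corollary; the only point requiring care is the edge-membership bookkeeping in the containment step, which depends essentially on the assumption $H_1 = \emptyset$. Without this hypothesis an edge could belong to $G_1$ alone, land in $G_1 \setminus G_3$, yet be absent from $G_2 \setminus G_3$, breaking the subgraph inclusion and with it the trivial \SEFE.
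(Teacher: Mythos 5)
Your proposal is correct and follows exactly the paper's intended argument: the paper justifies this corollary with the single remark that $H_1 = \emptyset$ forces $G_1 \setminus G_3$ to be a subgraph of $G_2 \setminus G_3$, making a \SEFE of the pair trivial, after which Theorem~\ref{thm:quaSEFEFromSEFE} applies. Your write-up simply spells out the edge-membership bookkeeping and the trivial \SEFE construction that the paper leaves implicit.
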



Contrary to the previous corollaries, Theorem~\ref{thm:quaSEFEFromSEFE} has no implication for the graph $H$, as there are instances with $H = \emptyset$ where no pair of graphs has a \SEFE. However, we show that a simple structure of $H$ is still sufficient for a \QSEFE.

\begin{theorem}
Let $G_1=(V,E_1)$, $G_2=(V,E_2)$, $G_3=(V,E_3)$ be planar graphs with shared vertex set $V$. If $H$ has a planar embedding that can be extended to a planar embedding $\mathcal{G}_i$ of each graph $G_i$, then $\langle G_1,G_2, G_3 \rangle$ admits a \QSEFE.
\label{thm:intersectionExtendable}
\end{theorem}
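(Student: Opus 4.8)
The plan is to start from the common embedding provided by the hypothesis and to route every edge according to one of three fixed planar drawings, choosing the three ``shared'' edge sets in a cyclic fashion so that each $G_i$ decomposes into only two planar subdrawings. Concretely, let $\mathcal{H}$ be a planar embedding of $H$ that extends to planar embeddings $\mathcal{G}_1,\mathcal{G}_2,\mathcal{G}_3$ of $G_1,G_2,G_3$. First I would realize $\mathcal{H}$ as an actual planar drawing $\Gamma_H$ of $H$, fixing the position of every vertex of $V$. Since each $\mathcal{G}_i$ restricts to $\mathcal{H}$ on $H$, and any two planar drawings inducing the same embedding of $H$ are related by a homeomorphism of the sphere, I can realize each $\mathcal{G}_i$ as a planar drawing $\Gamma_i^\star$ of $G_i$ whose restriction to $H$ is exactly $\Gamma_H$ (this is the same realizability used, via~\cite{PW01}, in the proofs above). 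Thus I obtain three planar drawings $\Gamma_1^\star,\Gamma_2^\star,\Gamma_3^\star$ that agree on the drawing $\Gamma_H$ of the common part $H$.

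Next I would assemble the final drawings by letting each edge inherit its curve from a single $\Gamma_i^\star$, assigning the three shared subgraphs cyclically: draw $H$ as in $\Gamma_H$; draw each private subgraph $H_i$ as in $\Gamma_i^\star$; and draw $H_{1,2}$ as in $\Gamma_2^\star$, $H_{2,3}$ as in $\Gamma_3^\star$, and $H_{1,3}$ as in $\Gamma_1^\star$. Because every edge receives exactly one curve, the shared edges are drawn identically in the two graphs that contain them and all vertices keep the position fixed by $\Gamma_H$, so the result is a valid \QSEFE candidate. It remains to check that each $\Gamma_i$ is quasiplanar. The cyclic assignment guarantees that each $G_i$ contains exactly one shared subgraph drawn ``foreignly'' (namely $H_{1,2}$ for $G_1$, $H_{2,3}$ for $G_2$, and $H_{1,3}$ for $G_3$), while all of its remaining edges are taken from the single planar drawing $\Gamma_i^\star$. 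Hence $G_i$ is the union of $G_i$ minus that one shared part, drawn planar as a subdrawing of $\Gamma_i^\star$, and of the foreign shared part, drawn planar as a subdrawing of another $\Gamma_j^\star$. As in the proof of Theorem~\ref{thm:quaSEFEFromSEFE}, a drawing partitioned into two planar subdrawings is quasiplanar, since any two edges that cross lie in different parts and three mutually crossing edges would force two crossing edges within one part.

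The crucial point, and the step I expect to be the main obstacle, is precisely the cyclic choice of owners for $H_{1,2},H_{2,3},H_{1,3}$. A naive assignment (for instance, always drawing a shared edge as in the lower-indexed graph) leaves one of the three graphs with two foreign shared parts, i.e.\ with edges coming from three mutually crossing-independent planar drawings, which only yields a decomposition into three planar subdrawings and does not preclude three pairwise crossing edges. Viewing the three graphs as the vertices of a triangle and the shared subgraphs as its edges, I need an orientation in which every vertex owns at least one incident edge; the cyclic orientation achieves exactly one owned and one foreign shared part per graph, which is what makes the two-subgraph decomposition go through. The only other technical point to justify carefully is the realizability claim above, namely that a fixed planar drawing of $H$ can be extended to a planar drawing of each $G_i$ realizing $\mathcal{G}_i$; this follows from the extendability hypothesis together with the fact that $\mathcal{G}_i$ and the drawn skeleton $\Gamma_H$ determine a face of $\mathcal{H}$ for each remaining edge, inside which it can be drawn without crossings.
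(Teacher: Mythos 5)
Your proposal is correct and takes essentially the same approach as the paper: the paper also fixes the common planar embedding of $H$ and assigns the pairwise-shared subgraphs cyclically (drawing $G_1\setminus H_{1,3}$, $G_2\setminus H_{1,2}$, and $G_3\setminus H_{2,3}$ with embeddings $\mathcal{G}_1$, $\mathcal{G}_2$, $\mathcal{G}_3$, respectively), so that each $G_i$ splits into two planar-drawn parts; your cyclic assignment is just the opposite orientation of the paper's, which is equivalent by symmetry. The realizability of a common drawing of $H$, which you spell out explicitly, is handled implicitly in the paper via the extendability hypothesis.
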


\begin{proof} 
We draw the graph $G_1\setminus H_{1,3} = H_1 \cup H_{1,2} \cup H$ with embedding $\mathcal{G}_1$, the graph $G_2 \setminus H_{1,2} = H_2 \cup H_{2,3} \cup H$ with embedding $\mathcal{G}_2$, and the graph $G_3\setminus H_{2,3} = H_3 \cup H_{1,3} \cup H$ with embedding $\mathcal{G}_3$. Then, the edges of $G_1$ are partitioned into two sets, one belonging to $G_1\setminus H_{1,3}$ and one to $G_3\setminus H_{2,3}$, each of which is drawn planar. As the same holds for the edges of $G_2$ and $G_3$, the statement follows.
\end{proof}

\begin{corollary}
Let $G_1=(V,E_1)$, $G_2=(V,E_2)$, $G_3=(V,E_3)$ be planar graphs with shared vertex set $V$. If $H$ is acyclic and has maximum degree $2$, then $\langle G_1,G_2, G_3 \rangle$ admits a \QSEFE.
\label{cor:intersectionSmall}
\end{corollary}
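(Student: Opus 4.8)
The plan is to derive this statement as a direct consequence of Theorem~\ref{thm:intersectionExtendable}. To invoke that theorem, I must exhibit a single planar embedding of $H$ that extends to a planar embedding of each of $G_1$, $G_2$, and $G_3$. The structural hypotheses on $H$ are exactly what make this possible: a graph that is acyclic and has maximum degree $2$ is a disjoint union of paths, and such a graph imposes essentially no constraints on how it sits inside a larger planar embedding. This is the same phenomenon already exploited for $H_{1,2}$ in Corollary~\ref{cor:G1G2Small}.

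First I would fix an arbitrary planar embedding $\mathcal{H}$ of $H$, and for each $i\in\{1,2,3\}$ consider the \PEP instance $(G_i, H, \mathcal{H})$; recall that $G_i$ is planar by assumption. To show that $\mathcal{H}$ extends, I would verify the two conditions of Lemma~\ref{lem:partialEmbedding} for an arbitrary planar embedding $\mathcal{G}_i$ of $G_i$. Since $H$ is acyclic, it contains no cycle $C$, so Condition~C.\ref{condition:2PEP} is vacuously satisfied. Since $H$ has maximum degree $2$, every vertex $v$ is incident to at most two edges of $H$, and a set of at most two edges has a unique cyclic order around $v$; hence the rotation of the $H$-edges around $v$ in $\mathcal{G}_i$ necessarily agrees with that in $\mathcal{H}$, so Condition~C.\ref{condition:1PEP} holds as well. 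By Lemma~\ref{lem:partialEmbedding}, every planar embedding of $G_i$ is then a solution to $(G_i, H, \mathcal{H})$; in particular, a solution $\mathcal{G}_i$ exists.

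Having obtained embeddings $\mathcal{G}_1,\mathcal{G}_2,\mathcal{G}_3$ that all restrict to the same embedding $\mathcal{H}$ of $H$, the hypothesis of Theorem~\ref{thm:intersectionExtendable} is met, and the existence of a \QSEFE of $\langle G_1,G_2,G_3\rangle$ follows immediately. The only point that requires care—and the analogue of the main obstacle in a less trivial setting—is ensuring that one and the same embedding $\mathcal{H}$ works for all three graphs simultaneously. Here this comes for free, because the two \PEP conditions turn out to be trivially satisfiable for every choice of $\mathcal{G}_i$, so no coordination among the three extensions is needed; any choice of $\mathcal{H}$ and any planar embeddings of the $G_i$ are automatically mutually compatible on $H$.
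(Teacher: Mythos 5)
Your proof is correct and follows exactly the route the paper intends: the corollary is stated as a consequence of Theorem~\ref{thm:intersectionExtendable}, with the hypothesis verified via Lemma~\ref{lem:partialEmbedding} by observing that acyclicity makes Condition~C.\ref{condition:2PEP} vacuous and maximum degree~$2$ makes Condition~C.\ref{condition:1PEP} automatic (at most two edges have a unique cyclic order), which is the same ``unique embedding'' argument the paper spells out for $H_{1,2}$ in Corollary~\ref{cor:G1G2Small}.
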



The above discussion shows that, if one of the seven subgraphs in Fig.~\ref{fig:intersections} is empty, or has a sufficiently simple structure, $\langle G_1, G_2, G_3\rangle$ admits a \QSEFE. Most notably, this is always the case in the \emph{sunflower} setting~\cite{DBLP:journals/tcs/AngeliniLN15,DBLP:journals/jgaa/HaeuplerJL13,DBLP:journals/jgaa/Schaefer13}, in which every edge belongs either to a single graph or to all graphs, i.e., $H_{1,2}=H_{1,3}=H_{2,3}=\emptyset$. We extend this result to any set of planar graphs. We remark that \SEFE is NP-complete in the sunflower setting for three planar graphs~\cite{DBLP:journals/tcs/AngeliniLN15,DBLP:journals/jgaa/Schaefer13}.

\begin{theorem}
Let $G_1=(V,E_1), \ldots, G_k=(V,E_k)$ be planar graphs with shared vertex set $V$ in the \sunflowerSetting. Then $\langle G_1, \ldots,G_k \rangle$ admits a \QSEFE.
\label{thm:sunflower}
\end{theorem}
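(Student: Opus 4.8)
The plan is to exploit the sunflower structure to decompose each $G_i$ into two edge-disjoint planar subgraphs that can be drawn independently, reusing the two-planar-subgraph argument underlying Theorem~\ref{thm:quaSEFEFromSEFE}. In the \sunflowerSetting every edge lies either in a single graph or in all $k$ graphs, so each $G_i$ is the edge-disjoint union $G_i = H \cup H_i$, where $H$ is the common subgraph shared by all graphs and $H_i$ collects the edges private to $G_i$. Both $H$ and $H_i$ are subgraphs of the planar graph $G_i$ and hence planar. Crucially, the only edges shared across the whole instance are those of $H$; the private parts $H_i$ are not shared with any other graph.

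First I would fix an arbitrary injective placement of the vertices of $V$ in the plane, and then draw $H$ planar at these positions using the algorithm of Pach and Wenger~\cite{PW01}, which produces a planar drawing of any planar graph realizing any prescribed vertex positions. This single drawing of $H$ is used verbatim in every $\Gamma_i$, which immediately guarantees that all shared edges are drawn identically and that every vertex occupies the same position in all drawings. Next, for each $i$ independently, I would again invoke Pach and Wenger to draw $H_i$ planar at the same fixed vertex positions, and define $\Gamma_i$ as the superposition of the common drawing of $H$ and the drawing of $H_i$.

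It remains to verify that each $\Gamma_i$ is quasiplanar. By construction the edges of $\Gamma_i$ partition into two classes, those of $H$ and those of $H_i$, and each class is drawn without self-crossings; hence every crossing in $\Gamma_i$ is between an $H$-edge and an $H_i$-edge. If three edges pairwise crossed, then by the pigeonhole principle two of them would lie in the same class and would therefore cross each other, a contradiction. Thus no three edges pairwise cross and $\Gamma_i$ is quasiplanar, exactly as in the proof of Theorem~\ref{thm:quaSEFEFromSEFE}.

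I do not expect a genuine obstacle: the sunflower hypothesis eliminates all partially shared subgraphs $H_{i,j}$, so there is no need to reconcile conflicting embeddings of a common subgraph as in Theorem~\ref{thm:intersectionExtendable}. The only point worth emphasizing is that we must \emph{not} insist on a single planar embedding of $H$ that extends to every $G_i$---such an embedding need not exist---but instead allow $H$-edges and $H_i$-edges to cross freely, since crossings between only two classes can never produce three mutually crossing edges. Realizing each $H_i$ on the already-fixed vertex set via Pach and Wenger is precisely what decouples the $k$ drawings and lets the argument go through for any number of graphs.
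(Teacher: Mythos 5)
Your proposal is correct and follows essentially the same route as the paper: decompose each $G_i$ into the common subgraph $H$ and its private part $G_i \setminus H$, draw $H$ once and each private part independently planar on the fixed vertex positions, and conclude quasiplanarity because every $G_i$ splits into two crossing-free edge classes. The paper's proof is just a terser version of this (it leaves the Pach--Wenger invocation and the pigeonhole argument implicit, both of which you correctly spell out).
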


\begin{proof}
Let $H$ be the graph induced by the edges belonging to all graphs. We independently draw planar the graph $H$ and every subgraph $G_i \setminus H$, for $i=1,\ldots,k$. This guarantees that each $G_i$ is drawn quasiplanar.
\end{proof}

\begin{figure}[tb]
    \centering
    \subfloat[\label{fig:intersections}]
	{\includegraphics[page=1,width=0.23\textwidth]{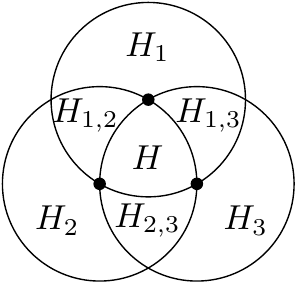}}
	\hfil
	\subfloat[\label{fig:k11-5edges}]
	{\includegraphics[width=0.27\textwidth]{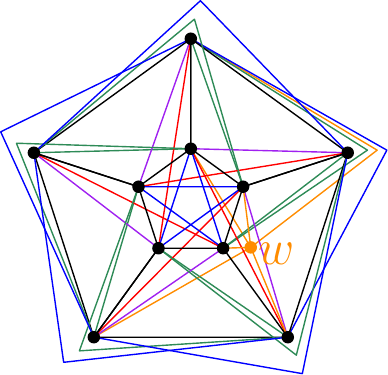}}
	\hfil
	\subfloat[\label{fig:escape-game}]
	{\includegraphics[page=4,width=0.45\textwidth]{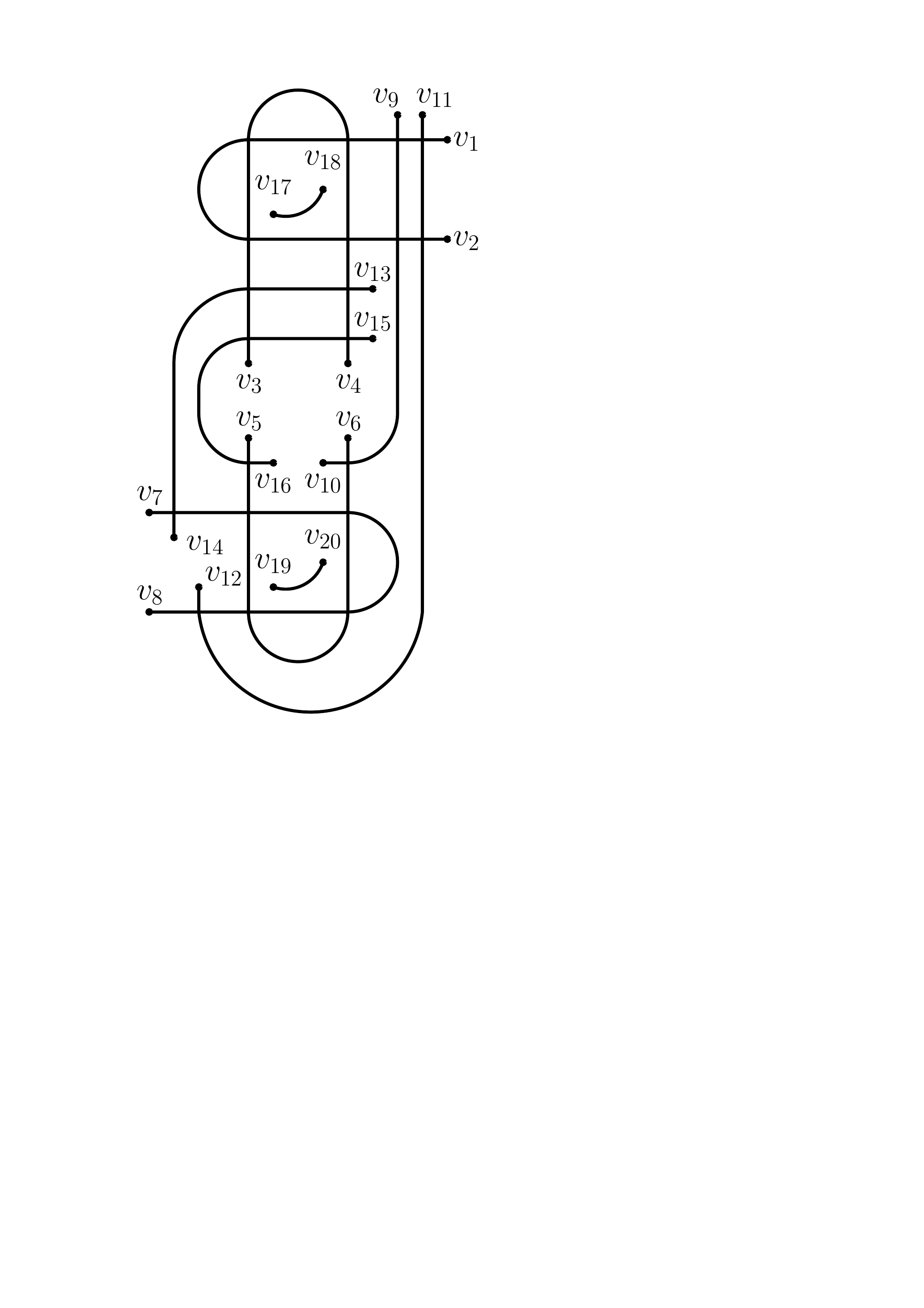}}
    \caption{
    (a) Subgraphs induced by the edges in one, two, or three graphs.
    (b) A simple quasiplanar drawing of  $Q_1$ in Theorem~\ref{thm:simpleNegativeInstance}, obtained by adding $w$ to the drawing of $K_{10}$ by Brandenburg~\cite{brandenburg16}. (c)~Theorem~\ref{lem:escape-game}: Edge $(v_{18},v_{20})$ crosses either all dotted blue or all dashed red edges, making $(v_5,v_6)$ and $(v_7,v_8)$ uncrossable.}
\end{figure}

We remark that all our proofs are constructive. Moreover, the corresponding algorithms run in linear time, as they exploit linear-time algorithms for constructing planar embeddings of graphs~\cite{DBLP:journals/jacm/HopcroftT74}, for extending their partial embeddings~\cite{ABFJKPR15}, and for partitioning $1$-planar graphs into planar graphs and forests~\cite{A14}.

\section{Counterexamples for \QSEFE}
\label{sec:negative}

In this section we complement our positive results, by providing negative instances of the \QSEFE problem in two specific settings.
We start with a negative result about the existence of a \emph{simple} \QSEFE for two quasiplanar graphs and one star. Here \emph{simple} means that a pair of independent edges in the same graph is allowed to cross at most once and a pair of adjacent edges in the same graph is not allowed to cross. Note that our algorithms in Section~\ref{sec:algorithms} may produce non-simple drawings. Also, the maximum number of edges in a quasiplanar graph with $n$ vertices depends on whether simplicity is required or not~\cite{AT07}.

\begin{theorem}
\label{thm:simpleNegativeInstance}
There exist two quasiplanar graphs $Q_1=(V,E_1)$, $Q_2=(V,E_2)$ and a star $S_3=(V,E_3)$ with shared vertex set $V$ such that $\langle Q_1, Q_2, S_3 \rangle$ does not admit a simple \QSEFE.
\end{theorem}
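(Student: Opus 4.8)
The plan is to exhibit an explicit instance and to argue that the combination of (i) the rigidity of a maximally dense simple quasiplanar drawing and (ii) the fact that a star drawn simply is necessarily drawn planar leaves no room to route the remaining edges of the second quasiplanar graph without creating three mutually crossing edges. In other words, I would isolate a sub-drawing whose shape is essentially forced by the simplicity and quasiplanarity constraints on the individual graphs, and then show that this forced shape is incompatible with completing $Q_2$.

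First I would fix $Q_1$ to contain a copy of $K_{10}$ on ten vertices $v_1,\dots,v_{10}$, together with a few edges incident to an eleventh vertex $w$, so that $Q_1$ is $K_{11}$ minus five edges; see Fig.~\ref{fig:k11-5edges}. The point of this choice is that $K_{10}$ attains the maximum number $6.5n-20$ of edges of a simple quasiplanar graph, so by the optimal drawing of Brandenburg~\cite{brandenburg16} its simple quasiplanar drawing is essentially unique. This \emph{rigidity} pins down the relative position of the ten vertices and, after adding $w$, the face in which $w$ lies and the rotation of its incident edges. I would then let $S_3$ be the star centered at $w$, some of whose spokes are shared with $Q_1$ and hence already routed. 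Since all edges of a star are pairwise adjacent, in a simple drawing no two of them may cross; thus $S_3$ is forced to be drawn planar, which together with the already-fixed shared spokes severely restricts how the remaining spokes may be threaded through the rigid drawing of $K_{10}$.

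The role of $Q_2$ is to convert these restrictions into an unavoidable triple crossing. I would design $Q_2$ to share with $S_3$ and $Q_1$ exactly the edges whose routes are already forced, and to contain, in addition, a short list of private edges: in the notation suggested by Fig.~\ref{fig:escape-game}, an edge such as $(v_{18},v_{20})$ together with two target edges $(v_5,v_6)$ and $(v_7,v_8)$. The heart of the argument is a separate \emph{escape-game} lemma: because the forced spokes of the star partition a region of the plane into a ``red'' fan and a ``blue'' fan, the edge $(v_{18},v_{20})$ is obliged to cross either every red edge or every blue edge; in either case the two target edges can no longer be completed without producing three pairwise crossing edges of $Q_2$. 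Establishing this lemma, that one edge must traverse an entire fan and thereby block two others, is the step I expect to be the main obstacle, since it requires a careful analysis of the possible isotopy classes of the forced curves and of the few degrees of freedom left by the rigidity of $Q_1$.

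Finally, I would assemble the pieces: any purported simple \QSEFE of $\langle Q_1,Q_2,S_3\rangle$ fixes the vertex positions and the shared edges as above, the escape-game lemma then forbids a simple quasiplanar drawing of $Q_2$, and hence no simple \QSEFE exists. A secondary technical point to verify is that $Q_1$ and $Q_2$ do individually admit simple quasiplanar drawings and that $S_3$ is a genuine star, so that the obstruction is really due to the simultaneity and simplicity requirements rather than to any single graph failing to be quasiplanar on its own.
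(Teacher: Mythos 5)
Your instance is essentially the paper's, but your argument is not, and it has two genuine gaps. The first is the rigidity claim: you assert that because $K_{10}$ attains the bound of $6.5n-20$ edges, its simple quasiplanar drawing is ``essentially unique,'' and you build everything on the forced position of $w$ and of the shared spokes. No such uniqueness result is available --- Brandenburg~\cite{brandenburg16} exhibits one optimal drawing, he does not classify all of them --- and establishing it would be a substantial project in its own right; nothing in the paper requires it. The second gap is the ``escape-game lemma'' you defer: that lemma is not part of this theorem at all, but the key step of Theorem~\ref{lem:escape-game}, which lives in a different setting (there the drawing of one matching is \emph{fixed} in advance, and the vertices $v_{18},v_{20}$ do not even exist in the eleven-vertex instance here). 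In the \QSEFE problem each graph may be drawn freely, subject only to shared vertex positions and shared edges, so importing that fan argument would need exactly the rigidity you cannot prove. As you yourself note, this step is the main obstacle; in your write-up it remains entirely open.

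What you are missing is that no geometric forcing is needed: the paper's proof is a counting argument applied to the \emph{union} of $Q_1$ and $Q_2$. Take $Q_1 = K_{10} \cup \{(w,v_1),\ldots,(w,v_6)\}$, $Q_2 = K_{10} \cup \{(w,v_7)\}$, and $S_3$ the star at $w$. In any simple \QSEFE, the shared copy of $K_{10}$ is drawn identically in both graphs, so the union drawing $\Gamma_{1,2}$ is well defined and has $45+7=52$ edges on $11$ vertices, exceeding the bound $6.5\cdot 11-20=51.5$ for simple quasiplanar graphs~\cite{AT07}; hence $\Gamma_{1,2}$ contains a violation of simplicity or quasiplanarity. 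Since $Q_1$ alone is drawn simple and quasiplanar, every such violation involves the unique edge of $\Gamma_{1,2}$ not in $Q_1$, namely $(w,v_7)$; since $Q_2$ alone is drawn simple and quasiplanar, every violation also involves some $(w,v_i)$ with $i\le 6$. In every violating configuration these two edges cross each other; they are both edges of $S_3$ and are adjacent at $w$, so the drawing of $S_3$ is not simple --- a contradiction. Your observation that a simply drawn star must be planar is precisely the final punchline of the real proof, but it is the density bound on the union, not rigidity of $K_{10}$, that forces two spokes to cross.
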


\begin{proof}
Let $V = \{v_1,\ldots,v_{10},w\}$ and let $E_{10}$ be the edges of the complete graph on $V\setminus \{w\}$. Further, let $E_1 = E_{10} \cup \{(w,v_1),\ldots,(w,v_6)\}$, let $E_2 = E_{10} \cup \{(w,v_7)\}$, and let $E_3=\{(w,v_1),\ldots,(w,v_{10}\}$. By construction, $S_3$ is the star on all eleven vertices with center $w$, while Fig.~\ref{fig:k11-5edges} shows that there is a simple quasiplanar drawing of $Q_1$ (and of $Q_2$, which is a subgraph of $Q_1$, up to vertex relabeling). 

Suppose that $\langle Q_1, Q_2, S_3 \rangle$ has a simple \QSEFE, and let $\Gamma_{1,2}$ be the drawing of the union of $Q_1$ and $Q_2$ that is part of it. Since the union of $Q_1$ and $Q_2$ has $52$ edges, which exceeds the upper bound of $6.5n-20$ edges in a simple quasiplanar graph~\cite{AT07}, $\Gamma_{1,2}$ is not simple or not quasiplanar. 
Since $(w,v_7)$ is the only edge in $\Gamma_{1,2}$ that is not in $Q_1$,
edge $(w,v_7)$ is involved in every crossing violating simplicity or quasiplanarity. Analogously, one of $(w,v_1), \dots, (w,v_6)$, say $(w,v_1)$, is involved in every crossing violating simplicity or quasiplanarity; in particular, $(w,v_1)$ crosses $(w,v_7)$. Since both $(w,v_1)$ and $(w,v_7)$ belong to $S_3$, the drawing of $S_3$ that is part of the simple \QSEFE is not simple, a contradiction.
\end{proof}

The second special setting is the one in which one of the input graphs is already drawn in a quasiplanar way, and the goal is to draw the other input graphs so that the resulting simultaneous drawing is a \QSEFE. This setting is motivated by the natural approach, for an instance $\langle G_1, \ldots, G_k \rangle$, of first constructing a solution for $\langle G_1, \ldots, G_{k-1} \rangle$ and then adding the remaining edges of~$G_k$. Note that, since the drawing of the first graph partially fixes a drawing of the second graph, this can be seen as a version of the \PEP problem for quasiplanarity. 

For the original \SEFE problem, this setting always has a solution when the graph that is already drawn (in a planar way) is a general planar graph, and the other graph is a tree~\cite{DBLP:conf/gd/Frati06}.
In a surprising contrast, 
we construct negative instances for the \QSEFE problem that are composed of two matchings only.


\begin{theorem}\label{lem:escape-game}
Let $M_1=(V,E_1)$ and $M_2=(V,E_2)$ be two matchings on the same vertex set $V$ and let $\Gamma_1$ be a quasiplanar drawing of $M_1$. Instance $\langle M_1, M_2\rangle$ does not always admit a \QSEFE in which the drawing of $M_1$ is $\Gamma_1$.
\end{theorem}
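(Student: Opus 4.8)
The plan is to exhibit a single explicit instance and show it admits no \QSEFE with $M_1$ drawn as the prescribed $\Gamma_1$. Before constructing it, I would record the guiding observation that rules out naive attempts: if $M_1$ and $M_2$ were edge-disjoint, then only the common vertex positions would be shared, and since any bijection between finite point sets extends to a self-homeomorphism of the plane, one could transport a straight-line planar drawing of the matching $M_2$ onto those positions to obtain a planar (hence quasiplanar) $\Gamma_2$. Thus the obstruction must be created by edges of $M_1 \cap M_2$: these are drawn identically in $\Gamma_1$ and $\Gamma_2$, so a cleverly drawn $\Gamma_1$ turns them into \emph{fixed} curves that obstruct $\Gamma_2$. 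Accordingly, I would let the two matchings share a set of edges that $\Gamma_1$ realizes as two ``walls'', a blue bundle and a red bundle, drawn in a convoluted but quasiplanar fashion, and I would place the remaining vertices so that $M_2 \setminus M_1$ contains a few ``free'' edges, among them $(v_5,v_6)$, $(v_7,v_8)$, and $(v_{18},v_{20})$.

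The first technical step is a separation lemma: in any drawing of $M_2$ extending $\Gamma_1$, the free edge $(v_{18},v_{20})$ must cross either every edge of the blue bundle or every edge of the red bundle. I would prove this by arranging the two bundles and the positions of $v_{18}$ and $v_{20}$ so that the fixed bundle curves, together with a suitable Jordan arc, enclose $v_{18}$ in a region whose boundary is ``guarded'' by the bundles, forcing any curve reaching $v_{20}$ to pierce one bundle completely; this is a parity/winding argument against the fixed obstacle curves.

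The second, and harder, step is the trap: I would show that, once $(v_{18},v_{20})$ has crossed a full bundle, the edges $(v_5,v_6)$ and $(v_7,v_8)$ cannot both be added without producing three pairwise crossing edges of $M_2$. Concretely, I would choose the positions of $v_5,\dots,v_8$ so that $(v_5,v_6)$ and $(v_7,v_8)$ are themselves forced to cross, and so that the portion of $(v_{18},v_{20})$ that traversed the bundle now separates $\{v_5,v_7\}$ from $\{v_6,v_8\}$ in a way that forces $(v_{18},v_{20})$ to cross both of them as well; the three edges $(v_5,v_6)$, $(v_7,v_8)$, $(v_{18},v_{20})$ then pairwise cross, violating quasiplanarity of $\Gamma_2$. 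Since the construction is symmetric between the two bundles, the same conclusion holds whether $(v_{18},v_{20})$ crossed blue or red, so no quasiplanar $\Gamma_2$ exists and the instance is negative.

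I expect the main obstacle to be exactly this last step --- the ``escape game'' --- because the adversary drawing $\Gamma_2$ has complete topological freedom to route the free edges with arbitrarily many bends and crossings, so I must rule out \emph{every} evasive routing rather than one canonical drawing. The argument therefore has to be phrased purely in terms of which fixed curves each free edge separates from which, and independently I would need to verify that the prescribed $\Gamma_1$ is genuinely quasiplanar, i.e.\ that no three of its bundle edges pairwise cross, so that the instance is legitimate. Together these two checks --- robustness of the trap and quasiplanarity of $\Gamma_1$ --- are where the real work lies.
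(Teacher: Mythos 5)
Your overall plan---an explicit instance in which the shared edges $E_1\cap E_2$ are drawn by $\Gamma_1$ as fixed ``bundles'' that trap a free edge of $M_2$, plus a final wall that produces the contradiction---is the same strategy as the paper's, and your opening observation that the obstruction must come from shared edges is exactly the paper's starting point. But your final step has a genuine gap, and it is not merely unfinished detail: the mechanism you propose cannot work as stated. You make the wall edges $(v_5,v_6)$ and $(v_7,v_8)$ \emph{free} edges of $M_2\setminus M_1$, and you argue that the portion of $(v_{18},v_{20})$ that traversed a bundle ``separates $\{v_5,v_7\}$ from $\{v_6,v_8\}$,'' forcing both walls to cross it. An open arc, however, does not separate the plane; any separation is effected only by the union of that arc with fixed bundle curves, and such a separation only forces $(v_5,v_6)$ to cross that \emph{union}. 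The adversary will route $(v_5,v_6)$ through a fixed bundle edge instead of through $(v_{18},v_{20})$ whenever quasiplanarity permits, and you supply no structure that forbids this. The same objection hits your claim that $(v_5,v_6)$ and $(v_7,v_8)$ are ``themselves forced to cross'': forcing two adversarially routed curves to cross each other requires its own escape-game argument, which is absent (you concede both points are ``where the real work lies''). With three adversarial curves in play instead of one, the case analysis needed is far beyond what the outline contains.

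The paper avoids all of this with one decisive design choice: the walls $(v_5,v_6)$ and $(v_7,v_8)$ are \emph{shared} edges, so their drawing is fixed by $\Gamma_1$; they cross each other, sealing the target vertex inside their lens, and every bundle edge crosses both of them in $\Gamma_1$. Then only a single adversarial curve, $(v_{17},v_{19})$, must be analyzed: since $v_{17}$ lies in a lens bounded by the mutually crossing fixed edges $(v_1,v_2)$ and $(v_3,v_4)$, the edge must cross exactly one of them (crossing both would already be a forbidden triple); either choice forces it through an entire bundle (blue or red); having crossed a bundle edge, it can cross neither wall, because the bundle edge crosses both walls in the fixed drawing and a triple would result; yet reaching $v_{19}$ requires crossing a wall. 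If you want to salvage your write-up, move $(v_5,v_6)$, $(v_7,v_8)$, and in fact every edge except the one trapped edge into $E_1\cap E_2$: that single change collapses your ``hard step'' into the paper's one-curve analysis, leaving only the concrete figure and the check that $\Gamma_1$ is quasiplanar. Note also that any such $\Gamma_1$ is necessarily non-simple---enclosing a vertex between two matching edges requires those edges to cross each other at least twice---so your verification step should not implicitly assume simple drawings.
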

\begin{proof}
First recall that the edges in $E_1 \cap E_2$ have to be drawn in the quasiplanar drawing~$\Gamma_2$ of $G_2$ as they are in~$\Gamma_1$. 
Consider the quasiplanar drawing $\Gamma_1$ of the matching $(v_{2i-1},v_{2i})$, with $i=1, \dots, 10$, in Fig.~\ref{fig:escape-game}, 
and let $E_2$ contain the edges $(v_{17},v_{19})$ and $(v_{18},v_{20})$.
Since $v_{17}$ is enclosed in a region bounded by the crossing edges $(v_1,v_2)$ and $(v_3,v_4)$, in any quasiplanar drawing of $M_2$ edge $(v_{17},v_{19})$ crosses exactly one of $(v_1,v_2)$ and $(v_3,v_4)$. In the first case, 
$(v_{17},v_{19})$ crosses also $(v_{13},v_{14})$ and $(v_{15},v_{16})$ (dotted blue). In the second case, $(v_{17},v_{19})$ crosses also $(v_{9},v_{10})$ and $(v_{11},v_{12})$
(dashed red). In both cases, $(v_{5},v_{6})$ and $(v_{7},v_{8})$ cannot be crossed, and thus $(v_{17},v_{19})$ cannot be drawn so that $\Gamma_2$ is quasiplanar.
\end{proof}


\section{Conclusions and Open Problems}
\label{sec:openProblems}

We initiated the study of simultaneous embeddability in the beyond planar setting, which is a fertile and almost unexplored research direction that promises to significantly enlarge the families of representable graphs when compared with the planar setting. We conclude the paper by listing a few open problems.

\begin{itemize}
\item A natural question is whether two $1$-planar graphs, a quasiplanar graph and a matching, three outerplanar graphs, or four paths admit a \QSEFE.
All our algorithms construct drawings with a stronger property than quasiplanarity, namely that they are composed of two sets of planar edges. Exploiting quasiplanarity in full generality may lead to further positive results.
\item Motivated by Theorem~\ref{thm:simpleNegativeInstance}, we ask whether some of the constructions presented in Section~\ref{sec:algorithms} can be modified to guarantee the simplicity of the drawings.
\item Another intriguing direction is to determine the computational complexity of the \QSEFE problem, both in its general version and in the two restrictions studied in Section~\ref{sec:negative}. In particular, the setting in which one of the graphs is already drawn can be considered as a quasiplanar version of the \PEP problem, which is known to be linear-time solvable in the planar case~\cite{ABFJKPR15}. 
\item Extend the study to other beyond-planarity classes. For example, do any two planar graphs admit a $k$-planar \SEFE for some constant $k$?
\end{itemize}

\clearpage

\bibliographystyle{splncs04}
\bibliography{references}



\end{document}